\documentclass[11pt]{article}
\usepackage[T1]{fontenc}
\usepackage{lmodern}
\usepackage[margin=1in]{geometry}
\usepackage{amsmath,amssymb,amsthm}
\usepackage{booktabs}
\usepackage{graphicx}
\usepackage[numbers,sort&compress]{natbib}
\usepackage[hidelinks,hypertexnames=false]{hyperref}
\usepackage{microtype}
\usepackage{algorithm}
\usepackage{algorithmicx}
\usepackage[noend]{algpseudocode} 

\newcommand{\Id}[1]{\mathrm{#1}}   
\newcommand{\Fn}[1]{\texttt{#1}}   
\newcommand{\Const}[1]{\textsc{#1}}
\algnewcommand\algorithmicswitch{\textbf{switch}}
\algnewcommand\algorithmiccase{\textbf{case}}
\algdef{SE}[SWITCH]{Switch}{EndSwitch}[1]{\algorithmicswitch\ #1\ \algorithmicdo}{\algorithmicend\ \algorithmicswitch}%
\algdef{SE}[CASE]{Case}{EndCase}[1]{\algorithmiccase\ #1}{\algorithmicend\ \algorithmiccase}%
\algtext*{EndSwitch}%
\algtext*{EndCase}%
\usepackage{program}
\usepackage{multirow}

\newtheorem{theorem}{Theorem}[section]
\newtheorem{proposition}[theorem]{Proposition}

\newcommand{\prismkeywords}{}
\newcommand{\keywords}[1]{\gdef\prismkeywords{#1}}

\AtBeginDocument{%
  \providecommand\BibTeX{{%
    \normalfont B\kern-0.5em{\scshape i\kern-0.25em b}\kern-0.8em\TeX}}}

\begin{document}

\title{Fully Dynamic Maintenance of Loop Nesting Forests in Reducible Flow Graphs}
\author{%
  Gregory Morse\\
  Eötvös Loránd University\\
  \texttt{morse@inf.elte.hu}
  \and
  Tam\'as Kozsik\\
  Eötvös Loránd University\\
  \texttt{kto@elte.hu}}
\date{}

\maketitle

\begin{abstract}
Loop nesting forests (LNFs) are a fundamental abstraction for reasoning about
control-flow structure, enabling applications such as compiler optimizations,
program analysis, and dominator computation. While efficient static algorithms
for constructing LNFs are well understood, maintaining them under dynamic graph
updates has remained largely unexplored due to the lack of efficient dynamic
depth-first search (DFS) maintenance.

In this paper, we present the first fully dynamic algorithm for maintaining loop
nesting forests in reducible control-flow graphs. Our approach leverages recent
advances in dynamic DFS maintenance to incrementally update loop structure under
edge insertions and deletions. We show that updates can be confined to local
regions of the depth-first spanning tree, avoiding global recomputation.

We provide formal invariants, correctness arguments, and complexity analysis,
and demonstrate how the maintained LNF enables efficient derivation of dominance
information. Our results establish LNFs as a practical dynamic abstraction for
modern compiler and analysis pipelines.
\end{abstract}

\noindent\textbf{Keywords:} \prismkeywords

\keywords{dynamic graph algorithms, control-flow graphs, loop nesting forests, depth-first search, dominators}


\section{Introduction}

Loop nesting forests (LNFs) are a central abstraction in compiler theory and
program analysis, capturing the hierarchical structure of loops in control-flow
graphs (CFGs). They underpin numerous applications, including optimization,
decompilation, and dominance analysis.

While efficient static algorithms for constructing LNFs are well established,
their dynamic maintenance under graph updates has received comparatively little
attention. The primary obstacle has been the lack of efficient algorithms for
maintaining a depth-first spanning tree (DFST), which forms the backbone of
classical loop detection methods. Static loop-forest construction is well served
by foundational work of Tarjan, Havlak, Sreedhar--Gao--Lee, and
Ramalingam~\cite{10.1145/800125.804040,Havlak1997,SreedharGaoLee1996,Ramalingam1999},
but these algorithms are fundamentally offline.

Recent advances in fully dynamic DFS maintenance~\cite{10.14778/3364324.3364329}
have changed this landscape. These developments enable local updates to DFS
structure, opening the possibility of maintaining higher-level abstractions
such as loop nesting forests incrementally.

In this paper, we address the problem of fully dynamic maintenance of LNFs for
\emph{reducible} control-flow graphs. Reducible graphs, which arise naturally
in structured programs, admit a well-defined loop hierarchy in which each loop
has a single entry point (header). This restriction allows for a significantly
simpler and more efficient dynamic algorithm compared to the general case.

\paragraph{Contributions.}
\begin{itemize}
  \item We present the first fully dynamic algorithm for maintaining loop nesting
        forests in reducible control-flow graphs.
  \item We show that updates can be confined to local regions of the DFST,
        avoiding global recomputation.
  \item We provide correctness arguments based on loop invariants and DFS structure.
  \item We demonstrate how the maintained LNF supports efficient derivation of
        dominance information.
\end{itemize}

\paragraph{Scope.}
This work focuses on reducible graphs. Extending these techniques to irreducible
graphs, which require handling multi-entry loops, is substantially more complex
and is deferred to future work.

\section{Background}

\subsection{Graphs, CFGs, and Dominance}

We work with a rooted control-flow graph (CFG) $G=(V,E)$ with distinguished
entry vertex $r$ such that every vertex is reachable from $r$. If a program has
multiple entry components, one may add a virtual root with outgoing edges to
each entry, after which the definitions below apply unchanged. We write
$n = |V|$ and $m = |E|$.

A vertex $v$ is reachable from $u$ if there exists a directed path
$u \xrightarrow{*} v$. A strongly connected component (SCC) is a maximal set of
vertices that are pairwise reachable. Loop structure arises from non-trivial
SCCs together with single-entry constraints induced by dominance.

For a CFG, a node $d$ \emph{dominates} a node $v$ if every path from $r$ to $v$
passes through $d$. Dominance is characterized by
\[
\mathrm{Dom}(v) =
\begin{cases}
\{r\} & \text{if } v = r,\\
\{v\} \cup \bigcap\limits_{p \in \mathrm{pred}(v)} \mathrm{Dom}(p)
& \text{otherwise.}
\end{cases}
\]
In reducible CFGs, every loop has a unique header that dominates the entire loop
body. This is the structural fact that makes incremental loop maintenance far
simpler than in the irreducible case.

\subsection{Depth-First Search and Edge Taxonomy}

Our maintenance algorithm relies on a depth-first spanning tree (DFST) rooted at
$r$. Each vertex $u$ carries preorder and postorder timestamps,
$\mathrm{pre}(u)$ and $\mathrm{post}(u)$, together with parent/child relations in
the DFST. We use the interval
\[
I_T(u) = [\mathrm{pre}(u),\mathrm{post}(u)]
\]
to test ancestry in constant time, and we write $\Fn{nca}(u,v)$ for the nearest
common ancestor of $u$ and $v$ in the maintained DFST.

We assume that this DFST is updated by a dynamic DFS backend, specifically the
fully dynamic directed DFS algorithm of Yang et al.~\cite{10.14778/3364324.3364329}.
The LNF layer therefore treats DFST repair as a dependency: it only needs the
updated tree, intervals, and ancestor/LCA queries.

\begin{table}[tbp]
  \centering
  \begin{tabular}{ll}
    \toprule
    Symbol & Meaning\\
    \midrule
    $C(u)$ & DFST children of $u$\\
    $T(u)$ & DFST subtree rooted at $u$\\
    $I_T(u)$ & DFST interval $[\mathrm{pre}(u),\mathrm{post}(u)]$\\
    $I_T(u).\mathrm{left}$ & $\mathrm{pre}(u)$\\
    $I_T(u).\mathrm{right}$ & $\mathrm{post}(u)$\\
    $\Fn{nca}(u,v)$ & nearest common ancestor of $u$ and $v$ in the DFST\\
    \bottomrule
  \end{tabular}
  \caption{Notation used throughout the dynamic LNF maintenance.}
  \label{tab:notation}
\end{table}

Each edge belongs to exactly one of the following classes: tree, forward, back,
cross, or self. For implementation purposes, cross edges may be refined into
forward-cross and back-cross according to the left-to-right order of DFST
intervals. This refinement is convenient for DFS maintenance, even though the
LNF logic itself only needs to distinguish whether an edge behaves like a back
edge or not.

\begin{table}[tbp]
  \centering
  \begin{tabular}{ccl}
    \toprule
    Edge type & Tree ancestry test & Interval test\\
    \midrule
    Tree & $v \in C(u)$ & $v \in C(u)$\\
    Forward & $v \in T(u) \wedge v \notin C(u) \wedge u \neq v$ & $I_T(v) \subset I_T(u)$\\
    Self & $u = v$ & $u = v$\\
    Back & $u \in T(v) \wedge u \neq v$ & $I_T(u) \subset I_T(v)$\\
    Cross & $u \notin T(v) \wedge v \notin T(u)$ & $I_T(u) \cap I_T(v)=\varnothing$\\
    Forward-cross & as Cross & $I_T(u).\mathrm{right} < I_T(v).\mathrm{left}$\\
    Back-cross & as Cross & $I_T(v).\mathrm{right} < I_T(u).\mathrm{left}$\\
    \bottomrule
  \end{tabular}
  \caption{Classification of an edge $(u,v)$ using DFST ancestry and interval tests.}
  \label{tab:edgetype}
\end{table}

Only tree-edge insertions and tree-edge deletions can change the DFST topology.
All other updates leave the DFST intact, though they may still change loop
membership. On deletion, there is one subtle ambiguity: after the DFST is
repaired, a deleted tree edge may be indistinguishable from a deleted forward
edge using only the new intervals. We therefore assume that a single
\textsc{wasTree} bit is cached at deletion time.

\begin{table}[tbp]
  \centering
  \begin{tabular}{ccccc}
    \toprule
    Edge type & Ins. & Del. & Detect (ins.) & Detect (del.)\\
    \midrule
    Forward & -- & -- & DFST & Ambiguous\\
    Self, back, back-cross & -- & -- & DFST & DFST\\
    Forward-cross & $\Fn{nca}(x,y)$ & -- & -- & --\\
    Tree & -- & $\Fn{nca}(x,y)$ & DFST & Ambiguous\\
    \bottomrule
  \end{tabular}
  \caption{Where DFST recomputation occurs, and when edge types can be read back from the repaired DFST.}
  \label{tab:edgedetect}
\end{table}

\section{Reducible Loop Nesting Forest Maintenance}\label{sec:lnf}

\paragraph{Motivation and scope.}
Loop nesting forests (LNFs) are a central abstraction for CFG structuring,
decompilation, and dominance-based analysis. Tarjan's reducibility test already
exposes the essential single-entry loop hierarchy in reducible flow
graphs~\cite{10.1145/800125.804040}, while Ramalingam showed that LNFs are rich
enough to support dominance-frontier and dominator-tree
construction~\cite{10.1145/570886.570887}. In this paper we restrict attention to
the reducible case: irreducible patterns are detected early and treated as
out-of-scope rather than maintained explicitly.

Among static approaches, Havlak's formulation and the DJ-graph view of
Sreedhar, Gao, and Lee generalized loop discovery beyond the simplest reducible
setting, while Ramalingam gave an almost-linear-time perspective on loop
identification~\cite{Havlak1997,SreedharGaoLee1996,Ramalingam1999}. These works
are the natural offline baselines against which a dynamic reducible-only
maintenance strategy should be understood.

\paragraph{Loop representation.}
We represent a loop as a pair $(h,B)$ consisting of a header $h \in V$ and a
body $B \subseteq V$ such that $h \in B$, there exists at least one back edge
$(v,h)$ with $v \in B$, and every node in $B$ lies on a path from $h$ to a back
edge source that remains inside the same strongly connected region. For
reducible CFGs, the header dominates the entire body, and the resulting loops
form a forest ordered by strict containment.

\subsection{Update Model and Dependencies}\label{subsec:lnf-model}

The LNF layer is maintained on top of the dynamic DFST. After each edge update,
we first repair the DFST if necessary, then classify the updated edge using the
tests in Table~\ref{tab:edgetype}, and finally perform a localized LNF repair.
This separation is crucial: once ancestor tests, subtree intervals, and
$\Fn{nca}(\cdot,\cdot)$ are available, loop maintenance can be expressed as a
local propagation problem over predecessors rather than as a global rescan.

For reducible graphs, the decisive facts are simple.
\begin{itemize}
  \item Self-loops are purely local.
  \item Forward edges never create new loops.
  \item Back edges are the only updates that can create or enlarge a reducible loop.
  \item Cross or forward edges that enter a loop body through a non-ancestor source
        witness a multi-entry configuration and therefore signal irreducibility.
\end{itemize}

We store two per-vertex maps:
\[
\Id{loopTypes}[v] \in \{\Const{NONHEADER},\Const{SELF},\Const{REDUCIBLE}\},
\qquad
\Id{loopHeaders}[v] \in V \cup \{\Const{NONE}\}.
\]
The first records whether $v$ heads no loop, only a self-loop, or a reducible
loop. The second records the immediate loop header that contains $v$, thereby
encoding the loop forest.

\subsection{Maintenance Algorithms}\label{subsec:lnf-algorithm}

Our implementation uses a Havlak-style forest representation. The forest may be
stored either as an explicit rooted tree with a virtual root or as a
union-find-like parent structure \emph{without} path compression, so that parent
links continue to reflect header ancestry. All updates rely on a small helper,
\Fn{findLoopHead}, which climbs existing loop headers until they are consistent
with a prospective enclosing header.

\begin{algorithm}[H]
  \caption{Helper: \Fn{findLoopHead}$(x,\ head)$}
  \label{alg:find-loop-head}
  \begin{algorithmic}[1]
    \Require map \(\Id{loopHeaders}\); predicate \(\Fn{isAncestor}(\cdot,\cdot)\)
    \State \(\Id{xs} \gets x\)
    \While{\(\Id{loopHeaders}[\Id{xs}] \neq \Const{NONE} \land \neg \Fn{isAncestor}(\Id{loopHeaders}[\Id{xs}], \Id{head})\)}
      \State \(\Id{xs} \gets \Id{loopHeaders}[\Id{xs}]\)
    \EndWhile
    \State \Return \(\Id{xs}\)
  \end{algorithmic}
\end{algorithm}

\noindent\textit{Insertion overview.}
After the DFST has been repaired, an insertion of edge $(x,y)$ is handled as
follows.
\begin{itemize}
  \item \textbf{Self edge.} If $x=y$, mark $x$ as \Const{SELF} unless it is already
        the header of a reducible loop.
  \item \textbf{Forward or cross edge.} If $y$ already belongs to a loop headed by
        $h=\Id{loopHeaders}[y]$ and $h$ is not an ancestor of $x$, the new edge
        introduces a second entry into that region and we report irreducibility.
        Otherwise the LNF is unchanged.
  \item \textbf{Back edge.} If $(x,y)$ is a back edge, we seed the worklist at the
        header $h:=y$ and propagate membership backwards through non-back
        predecessors. Each predecessor is first lifted by \Fn{findLoopHead} so
        that nested loops are absorbed only at the appropriate outer level.
  \item \textbf{Tree-edge creation by DFST repair.} If the insertion changed the
        DFST, the new non-tree edges are reclassified from the repaired tree and
        the same cases are applied inside the affected cone.
\end{itemize}

The concrete reducible insertion routine is given in Algorithm~\ref{alg:addedgelnf}.

\begin{algorithm}[H]
  \centering
    \begin{algorithmic}[1]

\Require edge $(x,y)$

\If{$x = y$} \Comment{self-loop identified}
  \If{$\Id{loopTypes}[x] = \Const{NONHEADER}$}
    $\Id{loopTypes}[x] \gets \Const{SELF}$
  \EndIf
  \State \Return
\EndIf

\State $\Id{fE} \gets \Fn{isForwardEdge}(x,y)$;\quad
       $\Id{cE} \gets \Fn{isCrossEdge}(x,y)$;\quad
       $\Id{lHy} \gets \Id{loopHeaders}[y]$

\If{$(\Id{fE} \lor \Id{cE}) \land (\Id{lHy} \neq \Const{NONE}) \land \neg \Fn{isAncestor}(\Id{lHy},x)$}
  \State \Fn{error}(\Const{IRREDUCIBLE})
\EndIf

\If{$(\Id{fE} \lor \Id{cE}) \land (\Id{lHy} = \Const{NONE})$}
  \Return
\EndIf

\If{\Fn{isBackEdge}$(x,y)$}
  \If{$\Id{lHy} = \Id{loopHeaders}[\Fn{findLoopHead}(x,y)]$}
    \Return
  \EndIf
  \State $h \gets y$;\quad $\Id{worklist} \gets \{h\}$
\Else
  \If{$\Id{lHy} \neq \Const{NONE}$}
    \If{$\Id{lHy} = \Id{loopHeaders}[\Fn{findLoopHead}(x,\Id{lHy})]$}
      \Return
    \EndIf
    \State $h \gets \Id{lHy}$;\quad $\Id{worklist} \gets \{h\}$
  \Else
    \State $h \gets x$;\quad $t \gets \Fn{nca}(x,y)$
    \Repeat
      \State $\Id{worklist} \gets \{\,z \mid z \in \Id{pred}[h],\ \Fn{isBackEdge}(z,h)\ \land\ \Id{loopHeaders}[z]=\Const{NONE}\ \land\ z \neq h\,\}$
      \If{$\Id{worklist} \neq \emptyset$} \textbf{break} \EndIf
      \State $h \gets \Id{parent}[h]$
    \Until{$(h = \Const{NONE}) \lor (h = t)$}
    \If{$(h = \Const{NONE}) \lor (h = t)$}
      \Return
    \EndIf
  \EndIf
\EndIf

\If{$\Id{loopTypes}[h] \in \{\Const{NONHEADER}, \Const{SELF}\}$}
  \State $\Id{loopTypes}[h] \gets \Const{REDUCIBLE}$
\EndIf

\While{$\Id{worklist} \neq \emptyset$}
  \State $v \gets \Id{worklist}.\Fn{pop}()$;\quad $\Id{loopHeaders}[v] \gets h$
  \ForAll{$w \in \Id{pred}[v]$}
    \If{$(w = v) \lor \Fn{isBackEdge}(w,v)$} \textbf{continue} \EndIf
    \State $w' \gets \Fn{findLoopHead}(w,h)$
    \If{$\neg \Fn{isAncestor}(h,w')$}
      \Fn{error}(\Const{IRREDUCIBLE})
    \EndIf
    \If{$h \notin \{\,w',\ \Id{loopHeaders}[w']\,\}$}
      $\Id{worklist} \gets \Id{worklist} \cup \{w'\}$
    \EndIf
  \EndFor
\EndWhile

\end{algorithmic}
\caption[Incremental algorithm for maintaining reducible LNF]{Incremental algorithm for maintaining reducible LNF}
\label{alg:addedgelnf}
\end{algorithm}

Algorithm~\ref{alg:addedgelnf} is the concrete insertion routine for the
reducible setting. Its fast path is intentionally narrow: if the new edge is
neither a back edge nor a non-ancestor entry into an existing loop, the update
usually terminates after a constant number of DFST and header checks. The only
substantial work occurs when a back edge seeds a new or enlarged loop body.

\subsection{Decremental Maintenance}\label{subsec:lnf-dec-summary}

Deletions are more delicate because an existing loop may shrink, split, or
cease to exist. Nevertheless, the same locality principle applies. After the
DFST is repaired, only the deleted edge's surrounding DFST cone together with
the ancestor chain of the affected header can change. The algorithm below
re-seeds membership from surviving back edges, climbs header links to locate the
nearest valid enclosing loop, and then rebuilds membership bottom-up.

\begin{algorithm}[H]
  \centering
  \caption[Decremental algorithm for maintaining reducible LNF]{Decremental algorithm for maintaining reducible LNF}
  \label{fig:removeedgelnf}
  \begin{algorithmic}[1]

\Require edge $(x,y)$ \Comment{deletion of $\langle x,y\rangle$}

\State $\Id{curLoops} \gets \emptyset$ \Comment{per-call cache used by \Fn{findLoopHead}}

\If{$x = y$} \Comment{self-loop removed}
  \If{$\Id{loopTypes}[x] = \Const{SELF}$}
    \State $\Id{loopTypes}[x] \gets \Const{NONHEADER}$;\quad $\Id{loopCounts}[0] \gets \Id{loopCounts}[0] - 1$
  \EndIf
  \State \Return
\EndIf

\State $\Id{fE} \gets \Fn{isForwardEdge}(x,y)$
\If{$\Id{fE} \lor \big(\Fn{isBackCrossEdge}(x,y) \land \Id{loopHeaders}[y] = \Const{NONE}\big)$}
  \State \Return \Comment{no loop structure is affected}
\EndIf
\If{$\Id{loopHeaders}[x] = \Const{NONE}$}
  \Return \Comment{$x$ not in any loop}
\EndIf

\If{\Fn{isBackCrossEdge}(x,y)}
  \State $h \gets \Id{loopHeaders}[y]$
  \While{$h \neq \Const{NONE} \land \neg \Fn{isAncestor}(h, x)$}
    $h \gets \Id{loopHeaders}[h]$
  \EndWhile
  \If{$h = \Const{NONE}$} \Return \EndIf
  \State $\Id{worklist} \gets \{\Fn{findLoopHead}(x,h)\}$
\ElsIf{$y = \Id{loopHeaders}[x]$} \Comment{back-edge removed}
  \State $h \gets y$;\quad $\Id{worklist} \gets \{x\}$
\ElsIf{$x = \Id{loopHeaders}[y]$} \Comment{tree edge from loop head removed}
  \State $h \gets x$;\quad $\Id{worklist} \gets \{y\}$
\Else \Comment{general tree/back-edge removal: find nearest common loop head}
  \State $\Id{ancX} \gets [\,]$;\quad $s \gets x$
  \While{$s \neq \Const{NONE}$} append $s$ to $\Id{ancX}$;\ $s \gets \Id{loopHeaders}[s]$ \EndWhile
  \State $h \gets y$
  \While{$h \neq \Const{NONE} \land (h \notin \Id{ancX})$} $h \gets \Id{loopHeaders}[h]$ \EndWhile
  \If{$h = \Const{NONE}$} \Return \EndIf
  \State $\Id{worklist} \gets \{\Fn{findLoopHead}(x,h)\}$
\EndIf
\algstore{decLNF}
\end{algorithmic}
\end{algorithm}
\begin{algorithm}                     
\centering
\footnotesize
\begin{algorithmic} [1]
\algrestore{decLNF}
\State $\Id{wl} \gets \emptyset$
\ForAll{$z \in \Id{pred}[h]$}
  \If{$z = h$}
    \State \textbf{continue}
  \EndIf
  \If{$\Fn{isBackEdge}(z,h)$}
    \State $v \gets \Fn{findLoopHead}(z,h)$
    \If{$\Id{loopHeaders}[v] = h$}
      \State $\Id{wl} \gets \Id{wl} \cup \{v\}$
    \EndIf
  \EndIf
\EndFor

\If{$\lvert\Id{wl}\rvert = 0 \land \Id{loopTypes}[h] = \Const{REDUCIBLE}$}
  \State $\Id{loopCounts}[1] \gets \Id{loopCounts}[1] - 1$
  \State $\Id{loopTypes}[h] \gets
    \begin{cases}
      \Const{SELF} & \text{if } h \in \Id{pred}[h] \\
      \Const{NONHEADER} & \text{otherwise}
    \end{cases}$
  \If{$h \in \Id{pred}[h]$}
    \State $\Id{loopCounts}[0] \gets \Id{loopCounts}[0] + 1$
  \EndIf
\EndIf

\State $\Id{newHead} \gets h$;\quad $\Id{loopBody} \gets \emptyset$

\While{\textbf{true}} \Comment{recompute membership with a DFS-ancestry shortcut}
  \If{$\forall z \in \Id{worklist}\ \exists w \in \Id{wl}:\ \Fn{isAncestor}(z, w)$}
    \State \textbf{break}
  \EndIf

  \While{$\Id{wl} \neq \emptyset$}
    \State pick and remove $v$ from $\Id{wl}$
    \State $\Id{loopHeaders}[v] \gets \Id{newHead}$;\quad $\Id{loopBody} \gets \Id{loopBody} \cup \{v\}$
    \If{$v \in \Id{worklist}$} $\Id{worklist} \gets \Id{worklist} \setminus \{v\}$ \EndIf
    \ForAll{$w \in \Id{pred}[v]$}
      \If{$(w = v) \lor \Fn{isBackEdge}(w, v)$} \textbf{continue} \EndIf
      \State $w' \gets \Fn{findLoopHead}(w, h)$
      \If{$\big(\Id{loopHeaders}[w'] \in \{h,\ \Id{newHead}\}\big) \land \big(w' \notin \Id{loopBody}\big)$}
        \State $\Id{wl} \gets \Id{wl} \cup \{w'\}$
      \EndIf
    \EndFor
  \EndWhile

  \State $\Id{newHead} \gets \Id{loopHeaders}[\Id{newHead}]$
  \If{$\Id{newHead} = \Const{NONE} \lor \Id{worklist} = \emptyset$}
    \State \textbf{break}
  \EndIf

  \State $\Id{wl} \gets \emptyset$
  \ForAll{$z \in \Id{pred}[\Id{newHead}]$}
    \If{$z = \Id{newHead}$}
      \State \textbf{continue}
    \EndIf
    \If{$\Fn{isBackEdge}(z,\Id{newHead})$}
      \State $v \gets \Fn{findLoopHead}(z,h)$
      \If{$\Id{loopHeaders}[v] \in \{h,\ \Id{newHead}\} \land v \notin \Id{loopBody}$}
        \State $\Id{wl} \gets \Id{wl} \cup \{v\}$
      \EndIf
    \EndIf
  \EndFor
\EndWhile

\While{$\Id{worklist} \neq \emptyset$}
  \State pick and remove $v$ from $\Id{worklist}$
  \State $\Id{loopHeaders}[v] \gets \Id{newHead}$
  \ForAll{$w \in \Id{pred}[v]$}
    \If{$(w = v) \lor \Fn{isBackEdge}(w,v)$} \textbf{continue} \EndIf
    \State $w' \gets \Fn{findLoopHead}(w, h)$
    \If{$\Id{loopHeaders}[w'] = h \land (w' \notin \Id{loopBody})$}
      \State $\Id{worklist} \gets \Id{worklist} \cup \{w'\}$
    \EndIf
  \EndFor
\EndWhile

  \end{algorithmic}
\end{algorithm}

\section{Correctness}

\begin{proposition}[Locality of change]
After inserting or deleting an edge $(x,y)$, the LNF can change only inside the
DFST cone repaired by the dynamic DFS layer together with the loop-header chains
reachable from the touched header(s). Vertices outside this region keep both
their loop type and their immediate loop header.
\end{proposition}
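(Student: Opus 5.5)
The plan is to characterize the LNF of a reducible graph intrinsically, independent of any particular DFST, and then compare the characterization before and after the update. For a reducible CFG, the loop with header $h$ has a canonical body: $h$ together with all vertices that reach some back-edge source $v$ with $(v,h)\in E$ along a path avoiding $h$. The back edges are exactly the edges whose target dominates their source, so this notion does not depend on which DFST was chosen. Loop types and immediate headers are determined by this family of bodies. The immediate header of $v$ is the header of the smallest body strictly containing $v$, excluding the case where $v$ is itself that header. I will use this characterization to show that a vertex's loop type and immediate header can change only if its own backward-reachability witness, or the witness of one of its enclosing headers, passes through the updated edge.

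First I fix the changed edge $(x,y)$ and split into the cases of Tables~\ref{tab:edgetype} and~\ref{tab:edgedetect}: self, forward, back/back-cross, forward-cross insertion, and tree deletion.

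\emph{Self edges.} These change only $\Id{loopTypes}[x]$.

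\emph{Non-back insertions.} Forward edges and non-back cross edges $(x,y)$ leave dominance of every existing back edge intact, as long as the graph stays reducible; otherwise the irreducibility check fires. Any new path $h\to\cdots\to v$ through $(x,y)$ that lies inside a body must therefore enter at $y$ from a vertex $x$ that the header already dominates. Such a path adds vertices only to bodies whose header lies on the header chain of $y$, namely $\Id{loopHeaders}[y]$ and its ancestors in the forest.

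\emph{Back-edge changes.} Inserting or deleting a back edge $(x,h)$ changes only the body of $h$. The vertices whose membership flips are those that reach $x$ without passing through $h$, and by dominance they lie in the DFST subtree $T(h)$. Their immediate headers are re-resolved by climbing the forest from $h$, which is precisely the chain of loop headers reachable from $h$.

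\emph{Tree-edge changes.} When the update alters the DFST, the DFS backend of Yang et al.\ re-roots only a cone of the tree. Edges whose endpoints both lie outside the cone keep their classification. Since dominance is intrinsic, a vertex outside the cone whose body-witness paths avoid $(x,y)$ keeps the same bodies. A body containing such a vertex can gain or lose members only through vertices in the cone, and those members are then attached along the header chain of a touched header.

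Finally I would combine the cases. Let $u$ be a vertex outside the cone and outside the touched header chains. I need to show that the smallest body containing $u$, and whether $u$ is itself a header with surviving back edges, is unchanged. Bodies containing $u$ are nested by the forest property, and each one is either unaffected by the update or headed by a touched header. In the second case, the touched header lies on $u$'s own chain, which would put $u$ in the excluded region, so every body containing $u$ is unaffected. The loop type of $u$ depends only on the in-edges of $u$ and on dominance of their sources by $u$. Since $u\neq x,y$ or lies outside the cone, these are unchanged.

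I expect the main obstacle to be the deletion case. Removing a tree edge can change dominance, so an edge that was a back edge may cease to be one, or a vertex may lose dominance over part of its body. I need to show that every vertex whose dominator set changes lies in the cone beneath $\Fn{nca}(x,y)$ or on the chain above $y$. My first attempt would use the standard fact that the idom of a vertex is a DFST ancestor. Deleting $(x,y)$ can then alter $\mathrm{Dom}(v)$ only for $v$ dominated by $y$ in the old graph, which gives $v\in T(y)$, inside the repaired cone. I would then argue that headers whose bodies change are $y$ or its forest ancestors.
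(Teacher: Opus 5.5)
Your route differs from the paper's. The paper's sketch is purely operational: the DFS layer repairs only a cone, and the LNF routines only walk predecessors of touched vertices and existing $\Id{loopHeaders}$ chains, so nothing outside that region is written. The correctness theorems then supply that what is written is the true forest. You instead reason about the intrinsic, DFST-independent forest, which would give an algorithm-independent locality result. That strategy is viable, and your insertion analysis is essentially right. One correction there: a new back edge into $h$ also enlarges the loops of all forest ancestors of $h$, not only that of $h$, although immediate headers still change only for the vertices entering $h$'s loop. The deletion half, however, does not go through as planned.

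The step you propose for deletion is false, and the worry behind it points the wrong way. Deleting an edge only removes paths, so, as long as every vertex stays reachable from $r$, dominator sets can only grow. Hence no surviving back edge stops being one, and no header loses dominance over its body. Your lemma says that deleting $(x,y)$ alters $\mathrm{Dom}(v)$ only for $v$ dominated by $y$ in the old graph. It fails on the acyclic graph with edges $r\to d$, $d\to v$, $d\to y$, $r\to x$, $x\to y$, $y\to v$. For DFS order $r,x,y,v,d$ the edge $(x,y)$ is a tree edge, and deleting it makes $d$ a dominator of $v$, although $y$ never dominated $v$ (path $r\to d\to v$).

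The real danger is the opposite one: grown dominance could turn an unrelated edge $(a,b)$ into a new back edge and create a loop headed by $b$ far from $y$. The missing idea is to rule this out using reducibility of the \emph{old} graph.
\begin{itemize}
  \item Since $b$ dominates $a$ afterwards, there is a simple cycle $C$ through $(a,b)$.
  \item Before the deletion, $C$ had a unique entry $c$ dominating $C$. Every entry of $C$ after the deletion was already an entry before, so $c$ is still the only one.
  \item If $c\neq b$, take a path from $r$ to $c$ meeting $C$ only at $c$, followed by the arc of $C$ from $c$ to $a$. This avoids $b$, a contradiction. So $c=b$, and $(a,b)$ was already a back edge.
\end{itemize}
Consequently the back-edge set loses exactly $(x,y)$ (if it was one), and bodies only shrink. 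Moreover, $\mathrm{body}(h)$ loses a vertex only if every witness path of that vertex, including its closing back edge into $h$, uses $(x,y)$. That forces $h=y$ or $y\in\mathrm{body}(h)\setminus\{h\}$, i.e.\ $h$ lies on the header chain of $y$. This one argument covers every edge type and needs no DFST cone at all.

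You need it, because your case split also omits deletions that change no tree edge. Deleting a forward edge $(p,q)$ is harmless: the tree path from $p$ to $q$ replaces it and avoids the header. Cross-edge deletions are not harmless. Take edges $r\to h$, $h\to a$, $a\to s$, $s\to h$, $h\to b$, $b\to a$ with DFS order $r,h,a,s,b$. Then $(b,a)$ is a cross edge, and deleting it removes $b$ from the loop of $h$; none of your paragraphs treats this case.

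Finally, state explicitly that your combination step counts as part of the region every vertex whose old or new header chain passes through a touched header. Only under that reading is the proposition true: a back-edge insertion repairs no cone, yet can move many vertices into $h$'s loop. This is also what the paper's sketch means by the reverse-reachable region seeded by the updated header.
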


\begin{proof}[Proof sketch]
The DFST backend confines structural change to the subtree rooted at the repair
locus shown in Table~\ref{tab:edgedetect}. The LNF routines inspect only
predecessors of already touched vertices and move only along existing
\(\Id{loopHeaders}\) parent chains. No step traverses unrelated DFST subtrees or
reassigns headers outside the reverse-reachable region seeded by the updated
header. Hence untouched cones remain unchanged.
\end{proof}

\begin{theorem}[Correctness of incremental maintenance]
Algorithm~\ref{alg:addedgelnf} maintains the reducible loop nesting forest after
an edge insertion, provided the updated CFG remains reducible.
\end{theorem}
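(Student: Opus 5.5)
The plan is to fix a precise target object and show that, after the insertion, the maps \(\Id{loopTypes}\) and \(\Id{loopHeaders}\) coincide with the canonical Havlak/Tarjan reducible LNF of the updated graph \(G'=G\cup\{(x,y)\}\) with respect to the repaired DFST. In that forest, \(h\) is a header iff it is the target of a back edge. The body of \(h\) is
\[
B(h)=\{h\}\cup\{v : v \text{ reaches some back-edge source } z \text{ of } h \text{ by a path avoiding } h\}.
\]
For \(v\neq h\), we have \(\Id{loopHeaders}[v]\) equal to the innermost such \(h\) with \(v\in B(h)\). Reducibility gives the two facts used throughout: \(h\) dominates \(B(h)\), hence is a DFST ancestor of every body vertex; and bodies are laminar. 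I will argue by induction on the sequence of updates. The hypothesis is that before the update the maps encode the LNF of \(G\) with respect to the current DFST. By the locality of change proposition, it suffices to compare old and new forests on the repaired cone and on the header chains the algorithm touches.

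\textbf{Step 1: a \Fn{findLoopHead} lemma.} Under the invariant, \(\Fn{findLoopHead}(w,h)\) returns the outermost loop ancestor \(w'\) of \(w\) in the forest whose own header is a DFST ancestor of \(h\), or returns \(w\) itself. Equivalently, \(w'\) is the representative of \(w\) at the nesting level immediately inside \(h\). The proof is a short induction along the \(\Id{loopHeaders}\) chain, using laminarity and the fact that headers are ancestors of their bodies.

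\textbf{Step 2: case analysis mirroring Algorithm~\ref{alg:addedgelnf}.}
\begin{itemize}
\item \emph{Self edge.} A self edge changes no body; only the \Const{SELF} flag changes.
\item \emph{Forward or cross edge with \(\Id{lHy}=\Const{NONE}\).} Such an edge creates no back edge. The question is whether it can enlarge a body. That would require \(y\) to lie in some \(B(h)\), but \(y\) is in no loop, so the LNF is unchanged.
\item \emph{Forward or cross edge with \(\Id{lHy}\) an ancestor of \(x\).} Any new path \(v\to x\to y\to z\) into a back-edge source \(z\) of a header \(h\) containing \(y\) must be examined. Either \(x\) is already in that loop (detected by the \Fn{findLoopHead} test), or \(x\) lies below \(\Id{lHy}\) outside the body. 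In the latter case the worklist propagation from \(h=\Id{lHy}\) adds exactly the newly reverse-reachable vertices.
\item \emph{Ancestor-walk branch with \(\Id{lHy}=\Const{NONE}\).} The repeat loop from \(x\) up to \(\Fn{nca}(x,y)\) finds an enclosing pending back edge whose body now grows. I must show that only ancestors of \(x\) below the NCA can be such headers, using the fact that headers dominate their bodies.
\item \emph{Back edge.} The header is \(y\). If \(x\) is already in \(B(y)\), the early return is correct. Otherwise the new body is the reverse closure from \(x\) avoiding \(y\), computed at the granularity given by Step 1.
\end{itemize}

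\textbf{Step 3: correctness of the worklist loop.} I will prove that, for a fixed \(h\), the while loop computes the new \(B(h)\) restricted to its level-one representatives. For soundness, every pushed \(w'\) reaches \(h\)'s back-edge sources avoiding \(h\), since it is a predecessor of a member. For completeness, take a vertex \(u\) in the new body. Any avoiding path from \(u\) to a back-edge source maps, via Step 1, to a chain of representatives, and each of these gets pushed. Inner loops are absorbed as whole subtrees of the forest: their headers are reassigned to \(h\), while their members keep their inner headers, which remain innermost. Back-edge predecessors are skipped because they belong to other headers' bodies. The irreducibility check never fires, by assumption. Vertices outside the new body keep their headers, by the locality proposition.

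\textbf{Main obstacle.} I expect the hardest part to be the case where the insertion changes the DFST itself. There, back edges of \(G\) may become forward or cross edges and vice versa, so the induction hypothesis "maps equal the LNF for the old DFST" no longer matches the new tree. My first attempt will rely on the fact that for reducible graphs the back edges are exactly the edges whose target dominates their source, independent of the DFS. Hence the header set and the bodies are DFS-invariant. Only the auxiliary ancestor tests change, and Step 1 needs only dominance-implied ancestry, which holds in every DFST. If this fails for the cone reclassification described in the insertion overview, I would fall back to treating each reclassified edge as a separate insertion and applying the cases of Step 2 inductively.
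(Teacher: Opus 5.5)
\textbf{There is a genuine gap: your proof verifies an idealized routine, not Algorithm~\ref{alg:addedgelnf} as printed.} Your outline follows the same route as the paper's sketch: an edge-type case split, \Fn{findLoopHead} as a lifting lemma, and soundness plus maximality of the worklist. Like that sketch, it argues about a Havlak-style routine, and the printed algorithm departs from it exactly at the steps you rely on.

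\emph{The back-edge early exit is not a test for $x\in B(y)$.} The code tests $\Id{lHy}=\Id{loopHeaders}[\Fn{findLoopHead}(x,y)]$. This compares the enclosing header of $y$ with that of $x$'s lifted representative. So it also fires when $x\notin B(y)$ but the two share an enclosing header, in particular when both are loop-free ($\Const{NONE}=\Const{NONE}$). For example, inserting $(b,a)$ into $r\to a\to b$ returns with no loop recorded.

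\emph{The worklist is seeded with the header.} When the early exit does not fire, the worklist is $\{h\}$ rather than (the representative of) $x$. The first pop executes $\Id{loopHeaders}[h]\gets h$, which erases $h$'s link to its enclosing header and creates a self-loop in the header chain. It then scans the entry predecessors of $h$. For the DFST parent $p$ of $h$ one gets $\Fn{findLoopHead}(p,h)=p$ and $\neg\Fn{isAncestor}(h,p)$, so \Const{IRREDUCIBLE} is raised on a reducible input. A concrete case is $r\to e\to y\to x$ with back edge $(y,e)$, inserting $(x,y)$. For $h=r$ nothing propagates at all, since every predecessor of $r$ is a skipped back edge.

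Consequently your soundness claim (a pushed $w'$ is a predecessor of a member, hence reaches a back-edge source avoiding $h$) fails at the member $h$. Likewise ``the irreducibility check never fires, by assumption'' is false. The $\Id{lHy}\neq\Const{NONE}$ branch has the same seeding defect. Read literally for the printed routine, the theorem is false. Your Step~3 is the right (Havlak) argument for a repaired routine that exits early iff $y$ lies on the header chain of $x$, seeds with $\Fn{findLoopHead}(x,h)$, and never reassigns $\Id{loopHeaders}[h]$.

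\textbf{Even for the repaired routine, several steps need real arguments.}
\begin{itemize}
\item \emph{Locality.} Invoking the locality proposition to conclude that untouched vertices keep their \emph{true} headers is circular, because its proof only bounds the routine's footprint. You need the structural fact that inserting $(x,y)$ enlarges only loops containing $y$. You also need that every new member of an enclosing loop already lies in the new body of the innermost affected header ($y$ for a back edge, $\Id{lHy}$ otherwise). Then the forest links carry membership outward with no explicit updates.
\item \emph{DFST change.} Using DFS-invariance of dominance back edges is a good idea the paper's sketch does not supply. You additionally need that every back edge of $G$ stays one in $G'$: an insertion can destroy dominance, and reducibility of $G'$ rules this out, so bodies only grow.
\item \emph{Ancestor-walk branch.} This branch is reachable only for a tree edge with $\Id{lHy}=\Const{NONE}$. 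Under your invariant its candidate set is always empty, so it is a no-op. That is correct, because such an edge cannot enlarge any body; there is no ``pending'' back edge to find.
\item \emph{Step 1 statement.} \Fn{findLoopHead} returns the \emph{innermost}, not outermost, chain element whose header is \Const{NONE} or an ancestor of its second argument. Whether it stops just below $h$ or at $h$ itself depends on whether \Fn{isAncestor} is reflexive. The algorithm's two early-exit tests presuppose opposite conventions, so fix one before stating Step~1.
\end{itemize}
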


\begin{proof}[Proof sketch]
The proof follows the offline characterization of reducible loops.
Self-loops are immediate: they only affect the type of their incident vertex.
Forward and cross edges cannot create a new single-entry loop. Their only
possible structural effect is to provide an additional entry into an existing
loop body; the test $\neg\Fn{isAncestor}(\Id{loopHeaders}[y],x)$ detects exactly
this forbidden situation and therefore rejects irreducible updates.

Back edges are the only edges that can create or enlarge a reducible loop. Once
a back edge $(x,h)$ is discovered, the worklist explores predecessors of the
emerging body while skipping back edges that already point into the same loop.
The helper \Fn{findLoopHead} lifts vertices through pre-existing loop headers so
that nested loops are absorbed at the correct enclosing level rather than split
incorrectly. The ancestor check ensures that every accepted predecessor remains
dominated by $h$, preserving single-entry semantics. Because propagation stops
exactly when no new admissible predecessor remains, the resulting body is both
sound and maximal with respect to $h$.
\end{proof}

\begin{theorem}[Correctness of decremental maintenance]
Algorithm~\ref{fig:removeedgelnf} maintains the reducible loop nesting forest
after an edge deletion.
\end{theorem}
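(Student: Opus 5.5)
The plan mirrors the incremental proof: take the offline characterization of reducible loops as the specification. For each header $h$, the body of the loop headed by $h$ is $h$ together with all vertices that reach a back-edge source $z$ with $(z,h)\in E$ along a path avoiding $h$. The immediate header of $v$ is the innermost such $h$. Since deletion preserves reducibility (removing an edge cannot create a second entry), I only need to show that after the routine, $\Id{loopTypes}$ and $\Id{loopHeaders}$ agree with this specification. By the Locality of change proposition, I may restrict attention to the repaired DFST cone and the header chain of the touched header $h$. Everything outside keeps its old, correct values, because its back-edge sources and non-back predecessor paths are unaffected.

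I will then do a case split following the branches of Algorithm~\ref{fig:removeedgelnf}.

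\textbf{Self-loop.} Only the type of $x$ can change. It reverts from $\Const{SELF}$ to $\Const{NONHEADER}$ exactly when $x$ heads no reducible loop, which is what the routine does.

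\textbf{Forward edge, or back-cross edge into a loop-free $y$.} The edge is neither a back edge nor on any path into a loop body through its header. In a reducible graph every in-body path between body vertices stays in the body, so these edges lie on no specification path and nothing changes. The same holds when $x$ lies in no loop.

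\textbf{Remaining cases.} These are a back-cross edge, a back edge $(x,h)$, a tree edge out of a header, or a general tree/back edge. In each, I identify $h$ as the innermost loop header whose body can lose members. For the general case this is the nearest common element of the header chains of $x$ and $y$, and I will argue that the chain climbs compute exactly this. Loops nested strictly inside the relevant components are untouched because their headers still dominate them. Loops strictly enclosing $h$ can only lose the vertices that leave $h$'s body.

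The core step is correctness of the rebuild. First I show that the set $\Id{wl}$ seeded from the surviving back edges $(z,h)$, lifted by \Fn{findLoopHead}, equals the set of maximal sub-loop representatives of $h$'s new back-edge sources. If it is empty, $h$ loses its $\Const{REDUCIBLE}$ status, and it keeps $\Const{SELF}$ precisely when $h\in\Id{pred}[h]$. The inner while loop is then the same backward propagation as in Algorithm~\ref{alg:addedgelnf}, restricted to vertices previously in $h$'s body, which suffices because deletion only shrinks bodies. So by the argument from the incremental correctness theorem, $\Id{loopBody}$ becomes exactly the new body of $h$. The outer loop then handles the vertices in $\Id{worklist}$ that fell out of $h$. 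It climbs $\Id{newHead}\gets\Id{loopHeaders}[\Id{newHead}]$ and reseeds from that header's back edges, so each displaced vertex is assigned to the innermost enclosing header whose body still contains it. The final loop assigns any leftovers to the last $\Id{newHead}$ reached, or to $\Const{NONE}$.

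I expect the main obstacle to be this climbing phase, and in particular the early-exit test ``$\forall z\in\Id{worklist}\,\exists w\in\Id{wl}:\Fn{isAncestor}(z,w)$''. It claims that once every displaced vertex is a DFST ancestor of a reseeded body vertex, it must lie in the current $\Id{newHead}$'s body. My first attempt at justifying this: a displaced $z$ that is an ancestor of a body vertex $w$ of a loop headed by $\Id{newHead}$ is reached by a tree path from $\Id{newHead}$, since $\Id{newHead}$ dominates $w$ and is an ancestor of $z$. The tree path from $z$ to $w$, together with the in-body path from $w$ to a back-edge source, witnesses $z$'s membership. I will prove this as a separate lemma, and I will also prove a potential argument that the climb terminates with monotonically shrinking $\Id{worklist}$. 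I will check the tree-edge deletion case separately, using the cached \textsc{wasTree} bit, so that the repaired DFST's new back-edge classifications are the ones used in the reseeding.
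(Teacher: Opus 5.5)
Your outline follows the same route as the paper's sketch: locate the affected header, reseed from surviving back edges, downgrade if none survive, and rebuild with \Fn{findLoopHead}. The climbing-phase argument, however, rests on a misreading of the routine. $\Id{worklist}$ never holds the displaced vertices. Every branch initializes it to a single seed ($x$, $y$, or $\Fn{findLoopHead}(x,h)$), and it only shrinks inside the \textbf{while true} loop. Every other vertex that leaves $h$'s body is reassigned only by the final sweep, which assigns the last $\Id{newHead}$ to everything reverse-reachable from the seed through representatives that still carry the stale header $h$ and are not in $\Id{loopBody}$. Your ancestor-shortcut lemma, proved for $z\in\Id{worklist}$, therefore certifies only the seed. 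It says nothing about the rest of what the sweep reassigns, nor about the vertices a first-iteration break leaves untouched.

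The missing invariant is that every vertex displaced from $h$'s body reaches the seed by a path avoiding $h$. It holds because such a vertex loses membership only when all its paths to $h$'s back-edge sources use the deleted edge, and that edge's tail is the seed or lies in the sub-loop the seed represents. In the branch $x=\Id{loopHeaders}[y]$ the edge leaves $h$, and nothing is displaced from $h$'s body. With this invariant, a seed that stays in $h$'s body means nothing is displaced, which justifies the early break. Likewise, once the seed is in $\Id{newHead}$'s new body, every unplaced displaced vertex is too, because the connecting path lies in $h$'s old body and so avoids $\Id{newHead}$. You would still need to show that each intermediate-level rebuild collects exactly the displaced vertices of that level.

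Your forward-edge case also argues the wrong thing. A forward edge between two members of a body can certainly lie on a membership path. What is true is that it is never needed: the DFST path from $x$ to $y$ runs through proper descendants of $x$. It therefore avoids every header that strictly dominates $x$, since such a header is a proper DFST ancestor.

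This matters because \Fn{isForwardEdge} is evaluated on the repaired DFST. The same early return therefore also absorbs deleted tree edges whose head stays below $x$, which is the ambiguity in Table~\ref{tab:edgedetect}. The routine never reads \textsc{wasTree}, so your separate \textsc{wasTree}-based case does not match the code. What you need there is the bypass through the repaired tree, plus the fact that deleting an edge from a reducible graph creates no new back edge. Your ``deletion only shrinks bodies'' silently relies on the same fact. Dominance only grows under deletion. If $v$ newly dominated $u$ for an edge $(u,v)$, the entry $e\neq v$ of an old cycle through $(u,v)$ would still dominate $v$, and that yields a path to $u$ avoiding $v$.

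Finally, $\Id{loopHeaders}[x]=\Const{NONE}$ also holds for outermost headers, not only for vertices in no loop. That early return needs the observation that an edge leaving a header lies on no membership path of its own loop.
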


\begin{proof}[Proof sketch]
Deletion can only invalidate loops that depended on the removed edge or on DFST
ancestry altered by the repair step. The decremental algorithm first identifies
the nearest enclosing header that may still witness the same loop, then reseeds
the body from surviving back edges into that header or one of its ancestors. If
no such seed remains, the header ceases to represent a reducible loop and is
downgraded to either \Const{SELF} or \Const{NONHEADER}. Otherwise, the worklist
rebuilds exactly the reverse-reachable portion that still belongs to the loop,
again using \Fn{findLoopHead} to respect nested headers. This reconstructs the
maximal remaining body without disturbing unrelated regions.
\end{proof}

\section{Complexity}

Let $k$ be the number of vertices whose loop assignment is inspected or changed
by an update, and let $\Delta$ denote the size of the DFST slice repaired by the
dynamic DFS layer when a tree edge is affected.

\begin{itemize}
  \item If the DFST topology does not change, insertion or deletion is processed in
        time linear in the touched reverse cone, namely $O(k)$ plus the cost of
        predecessor inspections on those vertices.
  \item If the DFST topology changes, the additional work is exactly the local DFST
        repair cost on the affected cone of size $\Delta$, followed by the same
        $O(k)$ LNF propagation bound inside that cone.
\end{itemize}

The key point is therefore not a single worst-case bound, but locality: both
the DFST and LNF layers restrict work to the slice actually exposed by the
update, avoiding whole-graph recomputation in the common case.

\section{Maintaining Dominator Information}

Ramalingam observed that loop nesting forests provide enough region structure to
derive iterated dominance frontiers and, from them, the dominator tree
itself~\cite{10.1145/570886.570887}. In the reducible setting maintained here,
this connection is especially clean because each loop header denotes a
single-entry region whose header dominates every node in the region.

This perspective complements, rather than replaces, the classic standalone
dominator literature initiated by Lengauer and Tarjan and refined in later
engineering work by Georgiadis, Tarjan, and
Werneck~\cite{LengauerTarjan1979,GeorgiadisTarjanWerneck2006}. Our point is that
once the reducible LNF is already being maintained for structuring purposes,
much of the dominance information can be recovered from that maintained region
structure instead of from a separate full recomputation pipeline.

Maintaining the LNF therefore yields two practical benefits. First, dominance
queries become cheaper: header-to-member dominance is immediate, and
header-to-header dominance reduces to ancestry in the loop forest. Second, a
dominator tree can be materialized on demand from the maintained LNF rather than
being updated as an entirely separate dynamic structure. This makes the LNF a
particularly attractive invariant for compiler and decompiler pipelines, where
structuring and dominance information are both required.

\section{Conclusion}

We have reformulated the paper around a single claim: reducible loop nesting
forests can be maintained dynamically by coupling a local DFST repair routine
with localized predecessor-based propagation from loop headers. The resulting
algorithm is conceptually simple, supports both insertion and deletion, and
matches the structure of the offline reducibility-based construction while
avoiding unnecessary global recomputation.

For reducible CFGs, the maintained LNF is more than a structural summary. It is
also a useful dynamic index for subsequent analyses, especially dominance and
region-based structuring. Extending the same level of elegance to irreducible
graphs remains interesting future work, but the reducible case already covers a
large and practically important class of control-flow graphs.

\bibliographystyle{abbrvnat}
\bibliography{biblio}

@article{10.14778/3364324.3364329,
author = {Yang, Bohua and Wen, Dong and Qin, Lu and Zhang, Ying and Wang, Xubo and Lin, Xuemin},
title = {Fully Dynamic Depth-First Search in Directed Graphs},
year = {2019},
issue_date = {October 2019},
publisher = {VLDB Endowment},
volume = {13},
number = {2},
issn = {2150-8097},
url = {https://doi.org/10.14778/3364324.3364329},
doi = {10.14778/3364324.3364329},
journal = {Proc. VLDB Endow.},
month = {oct},
pages = {142–154},
numpages = {13}
}

@inproceedings{10.1145/800125.804040,
author = {Tarjan, Robert},
title = {Testing Flow Graph Reducibility},
year = {1973},
isbn = {9781450374309},
publisher = {Association for Computing Machinery},
address = {New York, NY, USA},
url = {https://doi.org/10.1145/800125.804040},
doi = {10.1145/800125.804040},
booktitle = {Proceedings of the Fifth Annual ACM Symposium on Theory of Computing},
pages = {96–107},
numpages = {12},
location = {Austin, Texas, USA},
series = {STOC '73}
}

@article{10.1145/570886.570887,
author = {Ramalingam, G.},
title = {On Loops, Dominators, and Dominance Frontiers},
year = {2002},
issue_date = {September 2002},
publisher = {Association for Computing Machinery},
address = {New York, NY, USA},
volume = {24},
number = {5},
issn = {0164-0925},
url = {https://doi.org/10.1145/570886.570887},
doi = {10.1145/570886.570887},
journal = {ACM Trans. Program. Lang. Syst.},
month = {sep},
pages = {455–490},
numpages = {36}
}

@article{Havlak1997,
author = {Havlak, Paul},
title = {Nesting of Reducible and Irreducible Loops},
year = {1997},
month = {jul},
publisher = {Association for Computing Machinery},
address = {New York, NY, USA},
volume = {19},
number = {4},
issn = {0164-0925},
doi = {10.1145/262004.262005},
url = {https://doi.org/10.1145/262004.262005},
journal = {ACM Trans. Program. Lang. Syst.},
pages = {557--567}
}

@article{SreedharGaoLee1996,
author = {Sreedhar, Vugranam C. and Gao, Guang R. and Lee, Yong-Fong},
title = {Identifying Loops Using DJ Graphs},
year = {1996},
month = {nov},
publisher = {Association for Computing Machinery},
address = {New York, NY, USA},
volume = {18},
number = {6},
issn = {0164-0925},
doi = {10.1145/236114.236115},
url = {https://doi.org/10.1145/236114.236115},
journal = {ACM Trans. Program. Lang. Syst.},
pages = {649--658}
}

@article{Ramalingam1999,
author = {Ramalingam, G.},
title = {Identifying Loops in Almost Linear Time},
year = {1999},
month = {mar},
publisher = {Association for Computing Machinery},
address = {New York, NY, USA},
volume = {21},
number = {2},
issn = {0164-0925},
doi = {10.1145/316686.316687},
url = {https://doi.org/10.1145/316686.316687},
journal = {ACM Trans. Program. Lang. Syst.},
pages = {175--188}
}

@article{LengauerTarjan1979,
author = {Lengauer, Thomas and Tarjan, Robert E.},
title = {A Fast Algorithm for Finding Dominators in a Flowgraph},
year = {1979},
month = {jan},
publisher = {Association for Computing Machinery},
address = {New York, NY, USA},
volume = {1},
number = {1},
issn = {0164-0925},
doi = {10.1145/357062.357071},
url = {https://doi.org/10.1145/357062.357071},
journal = {ACM Trans. Program. Lang. Syst.},
pages = {121--141}
}

@article{GeorgiadisTarjanWerneck2006,
author = {Georgiadis, Loukas and Tarjan, Robert E. and Werneck, Renato F.},
title = {Finding Dominators in Practice},
year = {2006},
publisher = {Journal of Graph Algorithms and Applications},
volume = {10},
number = {1},
doi = {10.7155/jgaa.00119},
url = {https://doi.org/10.7155/jgaa.00119},
journal = {Journal of Graph Algorithms and Applications},
pages = {69--94}
}


\end{document}